\newcommand{\tr}{\mathrm{Tr}}
\newcommand{\be}{\begin{equation}}
\newcommand{\ee}{\end{equation}}
\newcommand{\ba}{\begin{eqnarray}}
\newcommand{\ea}{\end{eqnarray}}
\newtheorem{observation}{Observation}
\newtheorem{definition}{Definition}
\newtheorem{thm}{Theorem}
\newtheorem{lem}{Lemma}
\begin{document}
	
\title{Asymmetric One-Sided Semi-Device-Independent Steerability of Quantum Discordant States}
\author{Chellasamy Jebarathinam}
  \email{jebarathinam@gmail.com}

 \affiliation{Department of Physics and Center for Quantum Information Science, National Cheng Kung University, Tainan 701, Taiwan}
 
 \author{Debarshi Das}
  \email{dasdebarshi90@gmail.com}
\affiliation{Department of Physics and Astronomy, University College London, Gower Street, WC1E 6BT London, England, United Kingdom}


\author{R. Srikanth}
\affiliation{Theoretical Sciences Division, Poornaprajna Institute of Scientific Research (PPISR), Bidalur post, Devanahalli, Bengaluru 562164, India}

\begin{abstract}
Superlocality and superunsteerability provide operational characterization of quantum correlations in certain local and unsteerable states respectively.
Such quantum correlated states
have a nonzero quantum discord.  A two-way nonzero quantum discord  is necessary for quantum correlations pointed out by superlocality.  On the other hand, in this work, we demonstrate that a two-way nonzero quantum discord is not necessary to demonstrate superunsteerability. To this end, we demonstrate superunsteerability for one-way quantum discordant states. This in turn implies the existence of one-way superunsteerability and also the presence of superunsteerability without superlocality. Superunsteerability for nonzero quantum discord states implies  the occurence of steerability in a one-sided semi-device-independent way. Just like one-way steerability occurs for certain Bell-local states in a one-sided device-independent way, our result shows that one-way steerability can also occur for certain nonsuperlocal states but  in a one-sided semi-device-independent way.
\end{abstract} 
	
	\pacs{}
	
	\maketitle
	
\textit{Introduction:-} Local quantum measurements on entangled states can be used to demonstrate quantum nonlocality, originating from an experimental situation proposed by Einstein, Podolsky and Rosen \cite{EPR35} and the Bohm-Aharonov version of it \cite{BA57}. Bell proposed a framework to distinguish quantum nonlocality from local realistic description of the measurement results  by introducing an inequality, which is satisfied by any local hidden variable model for the observed correlations between space-like separated observers \cite{Bel64}. Such an inequality is violated by certain quantum correlations and the phenomenon is referred as Bell nonlocality \cite{BCP+14}. There exists another form of quantum nonlocality as pointed out by Schr\"{o}dinger \cite{Sch35}. This form of quantum nonlocality is called quantum steering and its framework analogous to the Bell's framework was proposed by Wiseman, Jones and Doherty (WJD) \cite{WJD07}.  
Apart from being fundamental aspects of quantum theory,
both the forms of quantum nonlocality find applications for quantum technologies  
(see Sec IV in \cite{BCP+14} and Sec V in \cite{UCN+19} for applications of Bell nonlocality and quantum steering respectively).  In contrast to Bell nonlocality of quantum correlations, quantum steering is an asymmetric form of quantum correlations both from fundamental and their applications point of view.
Quantum steering can exist in only one-way, that is, certain entangled states have steerability from only one side \cite{BVQ+14} (also see Sec. III. D in \cite{UCN+19}) and quantum steering can only provide one-sided device-independent applications \cite{BCW+12} (also see Sec. V in \cite{UCN+19}).  

Quantification of quantum resources through appropriate quantifiers is an important aspect of quantum information science \cite{CG19}. Quantification of quantum correlations beyond entanglement, called quantum discord, was proposed in \cite{OZ01, HV01}. This kind of quantum correlation has also emerged as a quantum resource for applications in quantum information science \cite{DSC08,Pir14} (and also see Sec VI in \cite{MBC+12}).  
From a quantum foundational perspective \cite{Boh35}, quantum discord was proposed as  Bohr's notion of non-mechanical disturbance \cite{Wis13}. Certain distinguishing features of quantum discord to quantum entanglement have been characterized  such as no death for discord \cite{FCC+10} and quantum discord may increase under certain decoherence conditions  \cite{SKB11}. 

Simulation of certain local and unsteerable states  using finite shared randomness  has been shown to motivate the amount of shared randomness as resource \cite{BHQ+15,ZZ19}.   Superlocality \cite{DW15} and superunsteerability  \cite{DBD+18} have been recently formalized to demonstrate a quantum advantage in simulating certain local and unsteerable correlations, respectively, in  terms of local Hilbert space dimension over the minimal amount of shared randomness required to simulate them.  
Such quantum advantage has been invoked to provide operational characterization of quantum correlations in certain local and unsteerable states having a nonzero quantum discord \cite{JAS17, JDS+18,JDS+18,DJB+18}. Superlocality or superunsteerability has also been found to be useful for certifying quantum discord  in a measurement device-independent way \cite{JD23} and also as a resource for measurement device-independent quantum key distribution protocols \cite{GBS16}, quantum random access codes \cite{JDK+19} and  quantum random number generation \cite{JD23}.   

Studying the precise relationships among quantum discord, superunsteerability and superlocality could provide better understanding of quantum correlations as well as their role as a resource in quantum information processing. Superlocality is inequivalent to quantum discord \cite{JAS17}. This arises the question whether superunsteerability is inequivalent to superlocality or quantum discord. 
Also, as superunsteerability is an asymmetric concept, a natural question that arises is of whether superunsteerability can occur for nonzero quantum discord states with the one-wayness property, analogous to one-way quantum steering in the case of certain entangled states. In this work, we answer this question in the affirmative by demonstrating that quantum correlations in certain one-way quantum discordant states can be operationally captured by superunsteerability. For such states, superunsteerability cannot occur both the ways, because the state has zero quantum discord on one side. Thus, in this work, we demonstrate the existence of one-way superunsteerability.
This in turn implies that superunsteerability is inequivalent to superlocality.

\textit{WJD's form of Quantum Steering:-} Let us consider a steering scenario where two spatially separated parties, say Alice and Bob, share an unknown quantum system $\rho_{AB}\in \mathcal{B}(\mathcal{H}_A \otimes \mathcal{H}_B)$. Here $\mathcal{B}(\mathcal{H}_A \otimes \mathcal{H}_B)$ stands for the set of all bounded linear operators acting on the Hilbert space $\mathcal{H}_A \otimes \mathcal{H}_B$.  Alice 
performs a set of black-box measurements  and the Hilbert-space dimension of Bob's 
subsystem is known. Such a scenario is called one-sided device-independent since
Alice's measurement operators $\{M_{a|x}\}_{a,x}$, which are  positive operator-valued measures (POVM), are unknown. The steering scenario 
is completely characterized by the set of unnormalized conditional states on Bob's
side $\{\sigma_{a|x}\}_{a,x}$, which is called an unnormalized assemblage. Each element
in the unnormalized assemblage is given by $\sigma_{a|x}=p(a|x)\rho_{a|x}$,  where $p(a|x)$ is the conditional probability of getting the outcome $a$ when Alice performs the measurement $x$;
$\rho_{a|x}$ is the normalized conditional state on Bob's side.
Quantum theory predicts that all valid assemblages should satisfy the following criteria:
\begin{equation}
\sigma_{a|x}=\tr_A ( M_{a|x} \otimes \openone \rho_{AB}) \hspace{0.5cm} \forall \sigma_{a|x} \in \{\sigma_{a|x}\}_{a,x}.
\end{equation}

\begin{definition}
In the above scenario, Alice demonstrates WJD's form of steerability to Bob \cite{WJD07}  if the assemblage does not have a local hidden state (LHS) model, i.e., if for all $a$, $x$, there is no decomposition of $\sigma_{a|x}$ in the form,
\begin{equation}
\sigma_{a|x}=\sum_\lambda p(\lambda) p(a|x,\lambda) \rho_\lambda,
\end{equation}
where $\lambda$ denotes classical random variable which occurs with probability 
$p(\lambda)$; $\rho_{\lambda}$ are called local hidden states which satisfy $\rho_\lambda\ge0$ and
$\tr\rho_\lambda=1$.
\end{definition}

We can further define the detection of above phenomenon from the nosignaling (NS) boxes (see Appendix  A) as follows.
\begin{definition}
Suppose Bob performs a set of projective measurements $\{\Pi_{b|y}\}_{b,y}$ on $\{\sigma_{a|x}\}_{a,x}$. Then the scenario is characterized by the set of measurement correlations, or box
between Alice and Bob $P(ab|xy)$:=$\{p(ab|xy)\}_{a,x,b,y}$, where $p(ab|xy)$ = $\tr ( \Pi_{b|y} \sigma_{a|x} )$. The box $P(ab|xy)$ detects WJD's steerability from Alice to Bob, 
iff it does not have a decomposition as follows \cite{WJD07, JWD07}: 
\begin{equation}
p(ab|xy)= \sum_\lambda p(\lambda) p(a|x,\lambda) p(b|y, \rho_\lambda) \hspace{0.3cm} \forall a,x,b,y; \label{LHV-LHS}
\end{equation}
where, $\sum_{\lambda} p(\lambda) = 1$, $p(a|x, \lambda)$ denotes an arbitrary probability distribution arising from local hidden variable (LHV) $\lambda$ ($\lambda$ occurs with probability $p(\lambda)$) and $p(b|y, \rho_{\lambda}) $ denotes the quantum probability of outcome $b$ when measurement $y$ is performed on local hidden state (LHS) $\rho_{\lambda}$.
\end{definition}

\textit{Quantum discord:-} In the following we present the definition of quantum discord  \cite{HV01, OZ01} from Alice to Bob $D^{\rightarrow}(\rho_{AB})$.
Quantum discord is defined as
\begin{align}\label{QDdef}
D^{\rightarrow}(\rho_{AB})= I(\rho_{AB}) - C^{\rightarrow}(\rho_{AB})
            = S(\rho_{B|A}) - \tilde{S}(\rho_{B|A}).
\end{align}            
Here $I(\rho_{AB})= 
 S(\rho_B) - \tilde{S}(\rho_{B|A})$,
is the quantum mutual information, and can be interpreted as the total correlations in \(\rho_{AB}\). 
\(S(\sigma) = -\mbox{tr}(\sigma \log_2 \sigma)\) is the von Neumann entropy of a density matrix \(\sigma\). 
 \(\tilde{S}(\rho_{B|A}) = S(\rho_{AB}) - S(\rho_A)\) is the ``unmeasured'' quantum conditional entropy \cite{qmi} (see also \cite{Cerf, SN96,GROIS}). 
On the other hand,  \(C^{\rightarrow}(\rho_{AB}) = S(\rho_B) - S(\rho_{B|A})\) can be interpreted as the classical correlations in \(\rho_{AB}\),
where the quantum conditional entropy is defined as  
\(S(\rho_{B|A}) = \min_{\{M_i^A\}} \sum_i p_i S(\rho_{B|i})\),
with the minimization being over all 
POVMs, \(\{M^A_i\}\),  performed on subsystem \(A\).
Here \(p_i = \mbox{tr}_{AB}( M^A_i   \otimes \mathbb{I}_B \rho_{AB} M^A_i   \otimes \mathbb{I}_B )\) is the probability of obtaining the outcome \(i\), and 
the corresponding post-measurement state  
for the subsystem \(B\) is \(\rho_{B|i} = \frac{1}{p_i} \mbox{tr}_A( M^A_i \otimes  \mathbb{I}_B  \rho M^A_i \otimes  \mathbb{I}_B )\).

Quantum discord $D^{\rightarrow}(\rho_{AB})$ as defined above captures quantum correlation in the state  from Alice to Bob. Similarly, quantum discord  from Bob to Alice $D^{\leftarrow}(\rho_{AB})$ can be defined.  $D^{\rightarrow}(\rho_{AB})$  vanishes for a given $\rho_{AB}$ if and only if it is a classical-quantum state of the form,
\begin{align}
\rho_{\texttt{CQ}}=\sum_i \ket{i} \bra{i}_A \otimes \rho^{(i)}_B,
\end{align}
where $\{\ket{i}\}$ forms an orthonormal basis on Alice's Hilbert space and $\rho^{(i)}_B$ are any quantum states on Bob's Hilbert space. On the other hand, $D^{\leftarrow}(\rho_{AB})$  vanishes for a given $\rho_{AB}$ if and only if it is a quantum-classical state of the form,
\begin{align}
\rho_{\texttt{QC}}=\sum_i   \rho^{(i)}_A  \otimes \ket{i} \bra{i}_B,
\end{align}
where now $\{\ket{i}\}$ forms an orthonormal basis on Bob's Hilbert space and $\rho^{(i)}_A$ are any quantum states on Alice's Hilbert space.

\textit{Super-unsteerability:-} We are going to present the formal definition of the notion \textit{``super-unsteerability"} for boxes having a LHV-LHS model \cite{DBD+18}. Before that we present the definition of \textit{``superlocality"} for local correlations \cite{DW15,JAS17}. Consider a Bell scenario, where both parties perform black box measurements. In this scenario, superlocality is defined as follows:
\begin{definition}
Suppose we have a quantum state in $\mathbb{C}^{d_A}\otimes\mathbb{C}^{d_B}$
and measurements which produce a local bipartite box $P(ab|xy)$ := $\{ p(ab|xy) \}_{a,x,b,y}$.
Then, superlocality holds iff there is no decomposition of the box in the form,
\begin{equation}
p(ab|xy)=\sum^{d_\lambda-1}_{\lambda=0} p(\lambda) p(a|x, \lambda) p(b|y, \lambda) \hspace{0.3cm} \forall a,x,b,y,
\end{equation}
with dimension of the shared randomness/hidden variable $d_\lambda\le$ min($d_A$, $d_B$).  Here $\sum_{\lambda} p(\lambda) = 1$, $p(a|x, \lambda)$ and $p(b|y, \lambda)$ denotes arbitrary probability distributions arising from LHV $\lambda$ ($\lambda$ occurs with probability $p(\lambda)$).
\end{definition}
In Ref. \cite{JAS17}, an example of superlocality has been demonstrated with the noisy CHSH local box given by 
\begin{equation} 
P(ab|xy)       =       \frac{2+(-1)^{a\oplus       b\oplus
    xy}\sqrt{2} V}{8}, \label{chshfam}
\end{equation}
with $ 0<  V \le 1/\sqrt{2}$.
Such local correlations can be produced by a two-qubit pure  entangled state or a two-qubit Werner state having entanglement or a nonzero quantum discord for appropriate local noncommuting measurements, on the other hand, it cannot be reproduced by a LHV model with $d_\lambda=2$ as shown in Ref. \cite{JAS17}. 

Now, consider a different scenario where one of the parties (say, Alice) performs black box measurements and another party (say, Bob) performs quantum measurements. In this steering scenario, the notion of \textit{``super-unsteerability"} has been defined as follows:
\begin{definition}
Suppose we have a quantum state in $\mathbb{C}^{d_A}\otimes\mathbb{C}^{d_B}$
and measurements which produce a unsteerable bipartite  box $P(ab|xy)$ := $\{ p(ab|xy) \}_{a,x,b,y}$.
Then, super-unsteerability holds iff there is no decomposition of the box in the form,
\begin{equation}\label{LHSd}
p(ab|xy)=\sum^{d_\lambda-1}_{\lambda=0} p(\lambda) p(a|x, \lambda) p(b|y, \rho_{\lambda}) \hspace{0.3cm} \forall a,x,b,y,
\end{equation}
with dimension of the shared randomness/hidden variable $d_\lambda\le d_A$.  Here $\sum_{\lambda} p(\lambda) = 1$, $p(a|x, \lambda)$ denotes an arbitrary probability distribution arising from LHV $\lambda$ ($\lambda$ occurs with probability $p(\lambda)$) and $p(b|y, \rho_{\lambda}) $ denotes the quantum probability of outcome $b$ when measurement $y$ is performed on LHS $\rho_{\lambda}$ in $\mathbb{C}^{d_B}$.
\end{definition}
In Ref. \cite{DBD+18}, two examples of superunsteerability have been demonstrated. In one of these examples, the unsteerable white noise-BB84 family given by  
\begin{equation}
\label{bb84}
P(ab|xy) = \frac{1 + (-1)^{a \oplus b \oplus x.y} \delta_{x,y} V }{4},
\end{equation}
with $0 < V \leq 1/\sqrt{2}$,
can be produced by a two-qubit Werner state having entanglement or a nonzero quantum discord for appropriate local noncommuting measurements. On the other hand, it cannot be simulated by a LHV-LHS model with $d_\lambda=2$ as shown in Ref. \cite{DBD+18}. These two examples demonstrate superunsteerability both the ways as they are symmetrical with respect to interchanging Alice and Bob. In the following, we demonstrate an example of superunsteerability asymmetrically. 

\textit{One-way Superunsteerability:-} Consider that the two spatially separated parties (say, Alice and Bob) share the following separable two-qubit state,
\begin{equation}
\label{state2n}
\rho = \frac{1}{2} \Big( |00\rangle \langle 00| + |+1 \rangle \langle +1| \Big) ,
\end{equation}
where, $|0\rangle$ and $|1\rangle$ are the eigenstates of the operator $\sigma_z$ corresponding to the eigenvalue $+1$ and $-1$ respectively; $|+\rangle$ is the eigenstate of the operator $\sigma_x$ corresponding to the eigenvalue $+1$. The above state has quantum discord  $D^{\rightarrow}(\rho_{AB})>0$
and $D^{\leftarrow}(\rho_{AB})=0$ (see Appendix B for the definition of quantum discord) since it  is not a classical-quantum state but a quantum-classical state \cite{HV01,OZ01}. If Alice performs the projective measurements of observables corresponding to the operators $A_0 = \sigma_x$ and $A_1 = \sigma_z$, and Bob performs projective measurements of observables corresponding to the operators $B_0 =  \sigma_x$ and $B_1 = \sigma_z$, then the following correlation is produced from the above quantum-classical state,

 \begin{equation}      
 \label{corr1}
 P(ab|xy) =\begin{tabular}{c|cccc}
 \backslashbox{(x,y)}{(a,b)} & (0,0) & (0,1) & (1,0) & (1,1)\\\hline\\[0.05cm]
(0,0) & $\dfrac{1}{2}$ & $\dfrac{1}{4}$ & $0$ & $\dfrac{1}{4}$ \\[0.5cm]
(0,1) & $\dfrac{3}{8}$ & $\dfrac{3}{8}$ & $\dfrac{1}{8}$ & $\dfrac{1}{8}$ \\[0.5cm]
(1,0) & $\dfrac{1}{4}$ & $\dfrac{1}{2}$ & $\dfrac{1}{4}$ & $0$ \\[0.5cm]
(1,1) & $\dfrac{3}{8}$ & $\dfrac{3}{8}$ & $\dfrac{1}{8}$ & $\dfrac{1}{8}$ \\
\end{tabular}
\end{equation}
Here $x$, $y$ denote the input variables on Alice's and Bob's sides respectively; and $a$, $b$ denote the outputs on Alice's and Bob's sides respectively.  In the following, we demonstrate that the box (\ref{corr1}) detects super-unsteerability 
of the quantum-classical state (\ref{state2n}).

Let us now proceed to analyse simulating the correlation given by Eq.(\ref{corr1}) with LHV at one side and LHS at another side. Before proceeding, let us define the following (for details, see Appendix A) which will be used throughout the article,
\begin{equation}
P_D^{\alpha\beta}(a|x)=\left\{
\begin{array}{lr}
1, & a=\alpha x\oplus \beta\\
0 , & \text{otherwise}\\
\end{array}
\right. , \hspace{0.1cm} P_D^{\gamma\epsilon}(b|y)=\left\{
\begin{array}{lr}
1, & b=\gamma x\oplus \epsilon\\
0 , & \text{otherwise}.\\
\end{array}
\right. \nonumber 
\end{equation} 

The  correlation given by Eq.(\ref{corr1}) has the following LHV-LHS model:
\begin{equation}
P(ab|xy)= \sum_{\lambda=0}^{2} p(\lambda) P(a|x, \lambda) P(b|y, \rho_{\lambda}),
\label{eee}
\end{equation} 
where $p(0)$ = $\frac{1}{2}$, $p(1)$ = $p(2)$ = $\frac{1}{4}$; $P(a|x,0)$ = $P_D^{00}(a|x)$, $P(a|x,1)$ = $P_D^{10}(a|x)$, $P(a|x,2)$ = $P_D^{11}(a|x)$
 and
\begin{align}       
P(b|y,\rho_0)  
&=\! \begin{tabular}{c|cc}
 \backslashbox{(y)}{(b)} & (0) & (1) \\\hline
(0) & $\frac{1}{2}$ & $\frac{1}{2}$  \\
(1) & $\frac{1}{2}$ & $\frac{1}{2}$  \\
\end{tabular}  \nonumber \\
&=\!  \langle \psi^{'}_0 | \{\Pi_{b|y}\}_{b,y} | \psi^{'}_0 \rangle, \nonumber
\end{align}
\begin{align}      
P(b|y,\rho_1)  &=\! \begin{tabular}{c|cc}
 \backslashbox{(y)}{(b)} & (0) & (1) \\\hline
(0) & $1$ & $0$  \\
(1) & $\frac{1}{2}$ & $\frac{1}{2}$  \\
\end{tabular} \nonumber\\
&=\!  \langle \psi^{'}_1 | \{\Pi_{b|y}\}_{b,y} | \psi^{'}_1 \rangle, \nonumber
\end{align}
\begin{align}      
P(b|y,\rho_2) &=\! \begin{tabular}{c|cc}
 \backslashbox{(y)}{(b)} & (0) & (1) \\\hline
(0) & $0$ & $1$  \\
(1) & $\frac{1}{2}$  & $\frac{1}{2}$  \\
\end{tabular} \nonumber\\
&=\!  \langle \psi^{'}_2 | \{\Pi_{b|y}\}_{b,y} | \psi^{'}_2 \rangle, \nonumber
\end{align} 
 where $\{\Pi_{b|y}\}_{b,y}$ corresponds to two arbitrary projective mutually unbiased measurements in the Hilbert space $\mathcal{C}^2$ corresponding to the operators $B_0 = |\uparrow_0 \rangle \langle \uparrow_0|$ $-$ $|\downarrow_0 \rangle \langle \downarrow_0|$ and $B_1 =  |\uparrow_1 \rangle \langle \uparrow_1|$ $-$ $|\downarrow_1 \rangle \langle \downarrow_1|$;  here $\{ |\uparrow_0 \rangle$, $|\downarrow_0 \rangle \}$ is an arbitrary orthonormal basis in the Hilbert space $\mathcal{C}^2$ and the orthonormal basis $\{ |\uparrow_1 \rangle, |\downarrow_1 \rangle \}$ in the Hilbert space $\mathcal{C}^2$ is such that aforementioned two measurements define two arbitrary projective mutually unbiased measurements in the Hilbert space $\mathcal{C}^2$. The $|\psi^{'}_{\lambda}\rangle$s that produce
$p(b|y,\rho_{\lambda})$s given above are given by  
$|\psi^{'}_0 \rangle = \frac{1}{\sqrt{2}} |\uparrow_0 \rangle + i  \frac{1}{\sqrt{2}} |\downarrow_0 \rangle,$
$
|\psi^{'}_1 \rangle =  |\uparrow_0 \rangle,
$
and
$
|\psi^{'}_2 \rangle = |\downarrow_0 \rangle,
$
which are all valid states in the Hilbert space $\mathcal{C}^2$.\\

Hence, the LHV-LHS decomposition of  the correlation given by Eq.(\ref{corr1}) can be realized with hidden variable having dimension $3$ (with two arbitrary projective mutually unbiased measurements at trusted party). This is also the minimal hidden variable dimension needed to simulate the correlation as 
we show in the following lemma.
\begin{lem}
The LHV-LHS decomposition of  the correlation given by Eq.(\ref{corr1}) cannot be realized with hidden variable having dimension $2$.
\end{lem}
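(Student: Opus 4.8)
The plan is to argue by contradiction: suppose the box in Eq.(\ref{corr1}) admits a decomposition of the form in Eq.(\ref{LHSd}) with only two hidden-variable values $\lambda\in\{0,1\}$, weights $p(0),p(1)>0$, local response functions $P(a|x,\lambda)$, and qubit local hidden states $\rho_0,\rho_1$ measured by Bob's operators $\{\Pi_{b|y}\}$. First I would record the single-party marginals forced by the box: summing over $b$ gives $p(a{=}0|x)=3/4$ for both $x=0,1$, while summing over $a$ gives $p(b|y)=1/2$ for all $b,y$. It is convenient to collect Bob's side into the numbers $\beta_0^{\lambda}:=P(0|0,\rho_\lambda)=\tr(\Pi_{0|0}\rho_\lambda)\in[0,1]$, and to work with the four deterministic extremal responses $P_D^{00},P_D^{01},P_D^{10},P_D^{11}$ already defined in the text, which are the building blocks of the model (\ref{eee}).

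The crucial leverage comes from the two vanishing entries $p(10|00)=0$ and $p(11|10)=0$. Since every term $p(\lambda)P(a|x,\lambda)P(b|y,\rho_\lambda)$ is non-negative, each must vanish separately. The condition $p(10|00)=0$ forces, for each $\lambda$, that the strategy either does not output $a{=}1$ at $x{=}0$ or has $\beta_0^{\lambda}=0$; the condition $p(11|10)=0$ forces that the strategy either does not output $a{=}1$ at $x{=}1$ or has $\beta_0^{\lambda}=1$. Reading these off the deterministic strategies: $P_D^{00}$ is unconstrained, $P_D^{10}$ (which outputs $a{=}1$ only at $x{=}1$) requires $\beta_0^{\lambda}=1$, $P_D^{11}$ (which outputs $a{=}1$ only at $x{=}0$) requires $\beta_0^{\lambda}=0$, and $P_D^{01}$ (which outputs $a{=}1$ at both inputs) would require $\beta_0^{\lambda}=0$ and $\beta_0^{\lambda}=1$ at once. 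Hence $P_D^{01}$ cannot carry positive weight, leaving only $P_D^{00},P_D^{10},P_D^{11}$ available.

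To close the argument I would feed the two Alice marginals back in. Writing $q_{00},q_{10},q_{11}$ for the total weight on each remaining strategy, the relations $q_{00}+q_{10}=3/4$, $q_{00}+q_{11}=3/4$, and $q_{00}+q_{10}+q_{11}=1$ have the unique solution $q_{00}=1/2$, $q_{10}=q_{11}=1/4$, all strictly positive. Thus all three strategies must appear with nonzero weight, which is impossible with only $d_\lambda=2$ hidden values; this contradiction proves the Lemma and shows that the dimension $d_\lambda=3$ realized in (\ref{eee}) is optimal. Notably the contradiction uses only the two zero entries and the two Alice marginals, so neither the $y=1$ data nor the detailed qubit geometry of $\rho_\lambda$ is needed (the $\beta_0^{\lambda}$ enter only as numbers in $[0,1]$). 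The step I expect to be the real obstacle is justifying that it is enough to analyse the deterministic extremal responses, which is the form used in the model (\ref{eee}), rather than fully general stochastic $P(a|x,\lambda)$; relatedly, it is exactly the two zero entries, and not the marginals alone, that are indispensable for excluding the one otherwise marginally-consistent pair $\{P_D^{00},P_D^{01}\}$, so care is needed to treat that case explicitly. Everything else reduces to the short linear bookkeeping above.
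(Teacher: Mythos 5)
Your instinct about the weak point is exactly right, and unfortunately it is fatal rather than a technicality. The part of your argument that survives stochastic responses is this: the zeros $p(10|00)=p(11|10)=0$ force every deterministic strategy in the support of $P(\cdot|\cdot,\lambda)$ to lie in $\{P_D^{00},P_D^{10},P_D^{11}\}$, with $P_D^{10}$ admissible only if $\beta_0^\lambda=1$ and $P_D^{11}$ only if $\beta_0^\lambda=0$, and the Alice marginals force total weights $q_{00}=1/2$, $q_{10}=q_{11}=1/4$. But the final step fails: a single hidden value may carry a \emph{mixture} of several admissible deterministic strategies, all attached to the same $\rho_\lambda$, so three nonzero weights do not require three hidden values. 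Since $P_D^{00}$ is compatible with any $\beta_0^\lambda$, one can take $\lambda=1$ to carry $\tfrac12(P_D^{00}+P_D^{10})$ with $\beta_0^1=1$ and $\lambda=2$ to carry $\tfrac12(P_D^{00}+P_D^{11})$ with $\beta_0^2=0$. Concretely, in the paper's own three-term model (\ref{eee}) the box $P(b|y,\rho_0)$ of $|\psi'_0\rangle$ is uniform, hence equal to $\tfrac12 P(b|y,\rho_1)+\tfrac12 P(b|y,\rho_2)$; substituting this and regrouping by local hidden state yields
\begin{align*}
P(ab|xy)=\;&\tfrac{1}{2}\cdot\tfrac{1}{2}\bigl(P_D^{00}+P_D^{10}\bigr)(a|x)\,P(b|y,\rho_1)\\
&+\tfrac{1}{2}\cdot\tfrac{1}{2}\bigl(P_D^{00}+P_D^{11}\bigr)(a|x)\,P(b|y,\rho_2),
\end{align*}
which is a valid LHV--LHS decomposition of (\ref{corr1}) with $d_\lambda=2$ and local hidden states $|{\uparrow_0}\rangle,|{\downarrow_0}\rangle\in\mathbb{C}^2$; one can verify directly that all sixteen entries of the table are reproduced. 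So the Lemma is false as stated, and no completion of your argument (or any other) can establish it without changing the hypotheses, e.g., by restricting Alice's response functions to be deterministic.

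For comparison, the paper's Appendix B proof treats non-deterministic Alice responses by assuming that any such two-value model arises from a three- or four-term model with \emph{pairwise distinct} deterministic Alice strategies, reduced by merging terms whose Bob boxes coincide. The model displayed above corresponds to a four-term deterministic model in which $P_D^{00}$ appears twice, once paired with each of the two distinct Bob states, and the reduction merges terms with equal \emph{Alice} strategies split across different $\rho_\lambda$; this case lies outside the paper's enumeration, which is why the published proof has the same blind spot that you flagged in your own.
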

The proof of this lemma is given in Appendix  B.
We now show the following result. 
\begin{thm}
The correlation given by Eq.(\ref{corr1}) demonstrates super-unsteerablity from Alice to Bob while having no super-unsteerablity from Bob to Alice. 
\end{thm}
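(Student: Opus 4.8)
The plan is to treat the two directions separately, exploiting the quantum-classical structure of the state $\rho$ in Eq.~(\ref{state2n}), which is precisely what breaks the symmetry. For the Alice-to-Bob direction I would combine two facts already at hand: the explicit decomposition in Eq.~(\ref{eee}) exhibits an LHV-LHS model for the box Eq.~(\ref{corr1}) with $d_\lambda = 3$, so the box is unsteerable from Alice to Bob, while Lemma~1 states that no LHV-LHS decomposition exists with $d_\lambda = 2$. Since a $d_\lambda = 1$ model is just a degenerate $d_\lambda = 2$ model (set $p(1)=0$), this rules out every decomposition with $d_\lambda \le 2$. As the state lives in $\mathbb{C}^2\otimes\mathbb{C}^2$ we have $d_A = 2$, so the definition of super-unsteerability is satisfied: the box is unsteerable yet admits no LHV-LHS model with $d_\lambda \le d_A$, and super-unsteerability from Alice to Bob holds.

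For the Bob-to-Alice direction the target is the reverse statement, so I would exhibit a valid LHS model, now with Bob untrusted and Alice trusted, that uses only $d_\lambda \le d_B = 2$. The key observation is that $\rho = \tfrac{1}{2}\bigl(|0\rangle\langle 0|_A\otimes|0\rangle\langle 0|_B + |+\rangle\langle +|_A\otimes|1\rangle\langle 1|_B\bigr)$ is a quantum-classical state, so Bob's marginal is diagonal in the fixed basis $\{|0\rangle_B,|1\rangle_B\}$. I would let $\lambda\in\{0,1\}$ label this classical index, take $p(0)=p(1)=\tfrac12$, assign the local hidden states $\rho_0 = |0\rangle\langle 0|_A$ and $\rho_1 = |+\rangle\langle +|_A$ on Alice's trusted side, and let Bob's black-box response be $p(b|y,\lambda)=\langle\lambda|M_{b|y}^{B}|\lambda\rangle$ for whatever POVM $\{M_{b|y}^{B}\}$ he applies. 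Inserting the product form of $\rho$ into $p(ab|xy)=\tr[(M_{a|x}^{A}\otimes M_{b|y}^{B})\rho]$ and noting that only the diagonal entries of $M_{b|y}^{B}$ in $\{|0\rangle_B,|1\rangle_B\}$ contribute, the trace factorizes exactly as $\sum_\lambda p(\lambda)\,p(b|y,\lambda)\,\tr(M_{a|x}^{A}\rho_\lambda)$, which is precisely a $d_\lambda = 2$ LHV-LHS decomposition in this direction. Since $\rho_0,\rho_1$ are legitimate qubit states and $\sum_b p(b|y,\lambda)=1$, the model is valid, and because $d_\lambda = 2 = d_B$ the box is reproducible within the allowed dimension; thus super-unsteerability from Bob to Alice does not hold.

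The main obstacle is conceptual rather than computational. The decisive point is to recognize that the quantum-classical form of $\rho$ forces a minimal two-element hidden-state ensemble in the Bob-to-Alice direction, because Bob's reduced state carries only two orthogonal classical labels, whereas the Alice-to-Bob direction admits no such collapse and genuinely needs three hidden-variable values. The factorization in the Bob-to-Alice direction is robust in that it holds for arbitrary Bob measurements, not just the $\sigma_x,\sigma_z$ choices; the only quantitatively nontrivial input, namely that two hidden values cannot suffice in the Alice-to-Bob direction, is supplied by Lemma~1, so no further heavy calculation is required beyond invoking it.
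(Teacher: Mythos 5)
Your proposal is correct, and its overall architecture matches the paper's proof: the Alice-to-Bob half is identical (invoke the explicit $d_\lambda=3$ LHV--LHS model together with Lemma~1, and note that $d_\lambda\le 2$ is thereby excluded since $d_\lambda=1$ is a degenerate case), and the Bob-to-Alice half likewise proceeds by exhibiting a $d_\lambda=2$ LHS--LHV model. Where you differ is in how that second model is obtained. The paper argues in two ways: it cites the result of \cite{DBD+18} that nonzero discord from Bob to Alice is necessary for superunsteerability in that direction (and the state is quantum-classical, so that discord vanishes), and it also writes down an explicit model by decomposing the box over local deterministic response functions for Bob with two hidden states on Alice's side. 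You instead read the model directly off the product structure of $\rho=\tfrac12(|0\rangle\langle 0|_A\otimes|0\rangle\langle 0|_B+|+\rangle\langle+|_A\otimes|1\rangle\langle 1|_B)$, taking $\lambda$ to be Bob's classical flag, $\rho_0=|0\rangle\langle 0|_A$, $\rho_1=|+\rangle\langle+|_A$, and $p(b|y,\lambda)=\langle\lambda|M_{b|y}|\lambda\rangle$. This buys you two things: the factorization holds for \emph{arbitrary} measurements on both sides (so the conclusion is measurement-independent, effectively re-deriving the discord-necessity argument at the level of the state), and the resulting table of conditional distributions demonstrably reproduces the box in Eq.~(\ref{corr1}), including Alice's biased marginal $p(a=0|x)=3/4$. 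Note that the paper's explicit construction assigns Alice the two \emph{orthogonal} hidden states $|\uparrow_0\rangle,|\downarrow_0\rangle$ with equal weights, which would force a uniform marginal on Alice's side; your choice of the non-orthogonal pair $|0\rangle,|+\rangle$ is the one consistent with the state, so your variant is, if anything, the more careful of the two.
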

\begin{proof}
We have shown that the unsteerable correlation given by Eq.(\ref{corr1}) can have a LHV-LHS model
with the minimum dimension of the hidden variable being $3$. On the other hand, we have seen that the unsteerable correlation given by Eq.(\ref{corr1}) can be simulated by using $2 \otimes 2$ quantum system (\ref{state2n}). This is an instance of super-unsteerability since the minimum dimension of shared randomness needed to simulate the LHV-LHS model of the correlation (\ref{corr1}) is greater than the local Hilbert space dimension of the shared quantum system (reproducing the given unsteerable correlation) at the untrusted party's side (who steers the other party, in the present case Bob).

On the other hand, the box (\ref{corr1}) does not have superunsteerability from Bob to Alice. This is because it arises from a unsteerable state having zero discord from Bob to Alice, whereas non-zero discord from Bob to Alice is necessary for producing superunsteerable (from Bob to Alice) correlation \cite{DBD+18}. It can also be checked by providing a LHS-LHV model with $d_\lambda=2$ as follows. From the decomposition of the box (\ref{corr1}) in terms of the local deterministic boxes, we obtain the following  LHS-LHV model:
\begin{equation}
P(ab|xy)= \sum_{\lambda=0}^{1} p(\lambda)  P(a|x, \rho_{\lambda}) P(b|y, \lambda),
\label{eeee}
\end{equation}
where $p(0)$ = $\frac{1}{2}$, $p(1)$ = $\frac{1}{2}$;\\
\begin{align}      
P(a|x,\rho_0)   &=\! \begin{tabular}{c|cc}
 \backslashbox{(x)}{(a)} & (0) & (1) \\\hline
(0) & $1$ & $0$  \\
(1) & $\frac{1}{2}$ & $\frac{1}{2}$  \\
\end{tabular} \nonumber\\
&=\!  \langle \psi^{'}_0 | \{\Pi_{a|x}\}_{a,x} | \psi^{'}_0 \rangle, \nonumber 
\end{align} 
\begin{align}
P(a|x,\rho_1)  &=\! \begin{tabular}{c|cc}
 \backslashbox{(x)}{(a)} & (0) & (1) \\\hline
(0) & $0$ & $1$  \\
(1) & $\frac{1}{2}$  & $\frac{1}{2}$  
\end{tabular} \nonumber\\
&=\!  \langle \psi^{'}_1 | \{\Pi_{a|x}\}_{a,x} | \psi^{'}_1 \rangle, \nonumber 
\end{align} 
 where $\{\Pi_{a|x}\}_{a,x}$ corresponds to two arbitrary projective mutually unbiased measurements in the Hilbert space $\mathcal{C}^2$ corresponding to the operators $A_0 = |\uparrow_0 \rangle \langle \uparrow_0|$ $-$ $|\downarrow_0 \rangle \langle \downarrow_0|$ and $A_1 =  |\uparrow_1 \rangle \langle \uparrow_1|$ $-$ $|\downarrow_1 \rangle \langle \downarrow_1|$; here $\{ |\uparrow_0 \rangle$, $|\downarrow_0 \rangle \}$  and  $\{ |\uparrow_1 \rangle, |\downarrow_1 \rangle \}$ define two arbitrary projective mutually unbiased measurements in the Hilbert space $\mathcal{C}^2$. The $|\psi^{'}_{\lambda}\rangle$s that produce
$p(b|y,\rho_{\lambda})$s given above are given by  
$
|\psi^{'}_0 \rangle =  |\uparrow_0 \rangle
$
and
$
|\psi^{'}_1 \rangle = |\downarrow_0 \rangle,
$
which are all valid states in the Hilbert space $\mathcal{C}^2$;\\
 and
$P(b|y,0)$ = $(P_D^{00}(b|y)+P_D^{10}(b|y))/2$ and $P(b|y,1)$ = $(P_D^{00}(b|y)+P_D^{11}(b|y))/2$. 
\end{proof}

\begin{figure}[t!]
\begin{center}
\includegraphics[width=7.5cm]{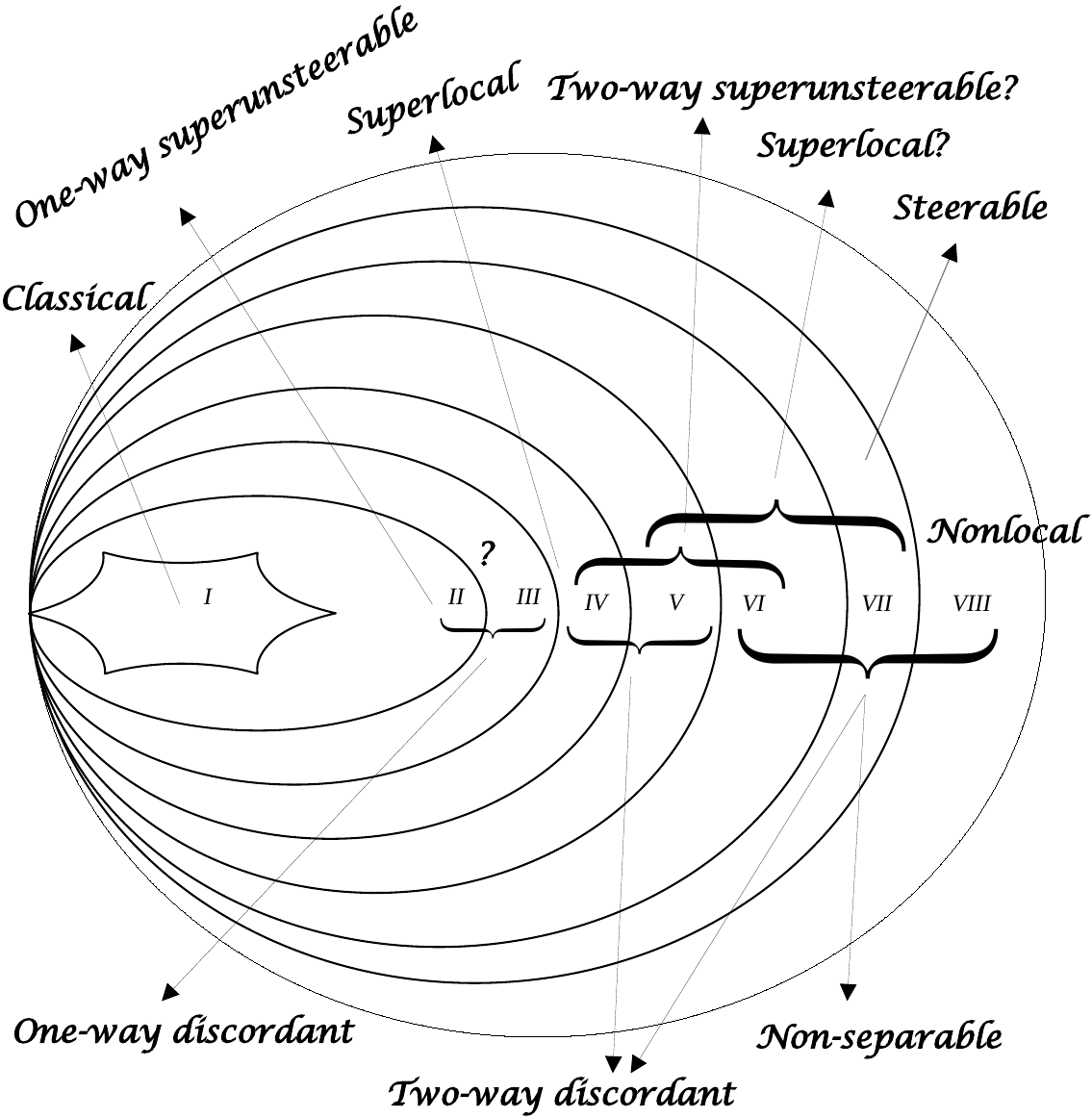}
\end{center}
\caption{Hierarchy of correlations in bipartite quantum states. 
The regions $I$, $II$, $III$, $IV$ and $V$ represent the convex subset of correlations in separable states. The complement
set, i.e., regions $VI$, $VII$ and $VIII$, represent non-seperable correlations in  entangled states. 
Within the non-separable correlations, the regions $VII$ and $VIII$ represent one-sided device-independent steerable and Bell nonlocal correlations, respectively, while the region $VI$ represents the non-separable correlations which are neither steerable nor Bell-nonlocal. On the other side, within the correlations in separable states,
correlations in one-way  discordant states
and two-way  discordant states are depicted in regions $II$ and $III$ and regions $IV$ and $V$, respectively, while the region $I$ represents the classical correlations that do not require a discordant state to be produced. Within the discordant separable correlations, the regions $II$ and $IV$ represent the 
one-way superunsteerable  and superlocal correlations that can be produced from one-way discordant and two-way discordant states, respectively. Whether one-way superunsteerable correlations form a strict subset of the correlations in one-way discordant states is an open question. Further, whether all correlations in regions $IV$, $V$ and $VI$ are two-way superunsteerable and whether all correlations in regions $V$, $VI$ and $VII$ are superlocal remain to be explored. Note that the noisy CHSH local box (\ref{chshfam}) exhibiting superlocality belongs to the regions $V$, $VI$ and $VII$, on the other hand,   the unsteerable white noise-BB84 family  (\ref{bb84}) exhibiting two-way superunsteerability and superlocality belongs to the regions $IV$, $V$ and $VI$.
\label{Fig:Hirarchy}}
\end{figure}

Note that Eq. (\ref{eeee}) can be seen as a LHV-LHV decomposition for the correlation with hidden variable dimension $d_\lambda=2$. Hence, the box (\ref{corr1}) is not superlocal. In a similar way, it can be easily checked that any correlation, which is not superunsteerable at least in one direction, is not superlocal as well.
Though the one-way discordant state (\ref{state2n}) does not have quantum correlation pointed out by superlocality,  the above result implies that it still have quantum correlation pointed out by superunsteerability asymmetrically. As an implication of this result,  in Fig. \ref{Fig:Hirarchy},  we depict a new hierarchy of quantum correlations in nonzero quantum discord states.

\textit{A weak form of  quantum  steering:-} Here we address the connection of superunsteerability to a weak form of  steering and its distinction from WJD's form of steerability at the level of certification of steerabillity. To this end, we note that in the given steering scenario, WJD's form of quantum steering implies the presence of steerability in a one-sided device-independent way, i.e., without making any assumption about the device used by the steering side  (i.e., untrusted side). On the other hand,  in the following, we define  another inequivalent form of  steerability to imply the presence of steerability in a one-sided semi-device-independent way, i.e., by assuming only the dimension of the steering side.
\begin{definition}
In the steering scenario as described before for defining WJD's form of steering, Alice demonstrates  steering to Bob in a one-sided semi-device-independent way if the assemblage does not have a LHS model with a restricted hidden variable dimension, i.e., if for all $a$, $x$, there is no decomposition of $\sigma_{a|x}$ in the form,
\begin{equation}\label{LHSdl}
\sigma_{a|x}=\sum^{d_\lambda-1}_{\lambda=0} p(\lambda) p(a|x,\lambda) \rho_\lambda,
\end{equation}
with $d_\lambda \le d_A$.
\end{definition}

Note that for a given assemblage which has a LHS model with the hidden variable dimension $d_\lambda$ as in Eq. (\ref{LHSdl}), there exists a suitable choice of POVMs on Bob's side, $\{M_{b|y}\}$, such it produces a LHV-LHS correlation $P(ab|xy)$ as in Eq. (\ref{LHSd}) with $p(b|y,\rho_\lambda)=\tr( M_{b|y} \rho_\lambda )$ and the same $d_\lambda$.
This implies that, in the context of the above definition of   steering, we have the following observation for the detection of it in a steering scenario where Bob performs particular measurements. 
\begin{observation}
A bipartite box detects steerability in a one-sided semi-device-independent way if and only if there is no decomposition of the box in the form given by Eq. (\ref{LHSd})  with $d_\lambda\le d_A$.
\end{observation}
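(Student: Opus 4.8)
The plan is to prove the biconditional by establishing a two-way correspondence, with matched weights $p(\lambda)$, Alice-responses $p(a|x,\lambda)$, and hidden-variable dimension $d_\lambda$, between dimension-restricted LHS models of the assemblage $\{\sigma_{a|x}\}_{a,x}$ (Eq.~(\ref{LHSdl})) and dimension-restricted LHV-LHS decompositions of the box $P(ab|xy)$ (Eq.~(\ref{LHSd})) that Bob's measurements $\{M_{b|y}\}$ generate from it. The whole argument rests on the single linear relation $p(ab|xy)=\tr(M_{b|y}\,\sigma_{a|x})$ together with the remark immediately preceding the statement, so it suffices to show that each decomposition can be transported to the other side.

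For the direction ``the box admits no decomposition $\Rightarrow$ it detects steering in a one-sided semi-device-independent way,'' I would argue by contraposition, which is exactly the content already recorded in that remark. Assuming the assemblage has a model (\ref{LHSdl}) with $d_\lambda\le d_A$, I substitute $\sigma_{a|x}=\sum_\lambda p(\lambda)p(a|x,\lambda)\rho_\lambda$ into $p(ab|xy)=\tr(M_{b|y}\sigma_{a|x})$ and use linearity of the trace to obtain
\begin{equation}
p(ab|xy)=\sum_\lambda p(\lambda)\,p(a|x,\lambda)\,\tr(M_{b|y}\rho_\lambda),
\end{equation}
which is precisely a decomposition of the box in the form (\ref{LHSd}) with $p(b|y,\rho_\lambda)=\tr(M_{b|y}\rho_\lambda)$ and the same $d_\lambda$. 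Hence a box admitting no such decomposition cannot have arisen from an assemblage with a dimension-restricted LHS model, so by the preceding Definition it certifies steering in the one-sided semi-device-independent way.

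For the converse, ``the box admits a decomposition $\Rightarrow$ it does not detect steering,'' I would begin from a given model (\ref{LHSd}) of the box, whose local hidden states $\rho_\lambda$ in $\mathbb{C}^{d_B}$ reproduce $p(b|y,\rho_\lambda)=\tr(M_{b|y}\rho_\lambda)$, and define the candidate assemblage $\tilde\sigma_{a|x}:=\sum_\lambda p(\lambda)p(a|x,\lambda)\rho_\lambda$, so that $\tr(M_{b|y}\tilde\sigma_{a|x})=\tr(M_{b|y}\sigma_{a|x})$ for every $b,y$ by construction. The one genuinely non-formal step---and the place I expect the only real obstacle---is lifting this equality of measurement statistics to the operator identity $\tilde\sigma_{a|x}=\sigma_{a|x}$. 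This lift holds as soon as Bob's measurements $\{M_{b|y}\}$ are informationally (tomographically) complete, since such a set spans the Hermitian operators on $\mathbb{C}^{d_B}$ and therefore separates states: $\tr(M_{b|y}X)=0$ for all $b,y$ forces $X=0$. Under that standard assumption the candidate model is an honest dimension-restricted LHS model (\ref{LHSdl}) of the true assemblage, so the assemblage is consistent with a semi-device-independent-unsteerable explanation and the box fails to detect steering. Combining the two directions yields the stated equivalence; I would also note the more operational reading in which ``detection'' refers only to the observed box data, where the reverse direction becomes immediate because the decomposition itself exhibits an unsteerable model reproducing the statistics, and tomographic completeness is needed only to draw conclusions about the underlying assemblage.
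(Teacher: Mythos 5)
Your proof is correct and, for the direction the paper actually argues, follows the same route: the paper's entire justification of the Observation is the remark immediately preceding it, namely that an assemblage with a dimension-restricted LHS model (Eq.~(\ref{LHSdl})) yields, under any choice of Bob's POVMs, a box decomposition of the form (\ref{LHSd}) with $p(b|y,\rho_\lambda)=\tr(M_{b|y}\rho_\lambda)$ and the same $d_\lambda$ --- exactly your contrapositive substitution-and-linearity argument. Where you go beyond the paper is the converse. The paper does not prove it; it is treated as the operational meaning of ``detect'': if the box admits a decomposition (\ref{LHSd}), the observed data are reproducible by \emph{some} dimension-restricted LHV-LHS model and hence cannot certify steering. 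That is precisely the ``operational reading'' you note at the end, under which the converse is immediate, and it is the reading the paper intends (the Observation is phrased about what a \emph{box} detects, not about the underlying assemblage). Your stronger reading --- lifting equality of statistics to the operator identity $\tilde\sigma_{a|x}=\sigma_{a|x}$, which requires Bob's measurements to be tomographically complete --- is a genuine and correctly identified subtlety that the paper leaves unstated; it matters if one wants to conclude unsteerability of the actual assemblage rather than mere non-detection from the box, and flagging it is a worthwhile clarification rather than a flaw in your argument.
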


From the above discussions it is clear that the correlation given by Eq. (\ref{corr1}) detects one-way steerability  in a one-sided semi-device-independent way. We have thus identified a new nonconvex subset of correlations in one-way discordant states, i.e., one-way superunsteerable correlations which do not exhibit superlocality as in Fig. \ref{Fig:Hirarchy}, but exhibit one-sided semi-device-independent steerability asymmetrically. There are superlocal correlations which exhibit two-way superunsteerability as in the example given by Eq. (\ref{bb84}) \cite{DBD+18},
and hence, they exhibit two-way steerability in a one-sided semi-device-independent way.

\textit{Conclusions:-} In this work, we have demonstrated the existence of a new asymmetric nature of quantum correlations  
in quantum discordant states. This asymmetric nature of quantumness arises due to one-way superunsteerability. For quantumness pointed out by superlocality, a two-way nonzero  quantum discord  is necessary. Whereas, superunsteerability being an asymmetric concept, to reveal quantumness pointed out by superunsteerability, a one-way nonzero discord suffices as we demonstrated in this work. This result helps us to obtain a precise relationship among quantum discord, superunsteerability and superlocality. We hope that this relationship would stimulate to investigate 
superunsteerability as a distinct resource than superlocality for quantum 
information processing.

Superunsteerability indicates a weak form of 
steerability, i.e., if an assemblage produced in a one-sided SDI steering scenario, then demonstrating superunsteerability is equivalent to steerability of the assemblage.
This form of steerability is due to the quantum advantage in using a quantum system of lower local Hilbert space dimension over the requirement of high dimensionality of the hidden variables.
Since steering is truly a quantum phenomenon \cite{Sch35}, superunsteerability captures  a genuinely quantum effect, although it does not indicate steerability in the strongest way as the WJD's form of steerability occurs for certain entangled states \cite{WJD07}. 
Just like the WJD's form of steerability can be witnessed through a suitable inequality or criterion, in Ref. \cite{JD23}, we have shown a certification of quantum discord which provides sufficient criterion for determining superunsteerability of the two-qubit discordant states.  
While discord has been operationally understood as having classical correlations assisted with quantum coherence  rather than quantum correlations \cite{Vedral2017}, our results in this work and our recent other works on superunsteerability \cite{DBD+18, JDK+19,JD23} thus bring the insight that a certain discordant states have also quantum correlations exhibiting quantum steering, just like a certain entangled states have steerability, though in a weaker form.

\textit{Acknowledgements.} CJ is grateful to Dr. Ashutosh Rai, Dr. Shin-Liang Chen, Dr. Manik Banik and Prof. Wei-Min Zhang for helpful discussions
and would like to thank the National Science
and Technology Council (formerly Ministry of Science and
Technology), Taiwan  (Grant No. MOST 111-2811-M-006-040-MY2). DD acknowledges the Royal Society (United Kingdom) for the support through the Newton International Fellowship (NIF$\backslash$R$1\backslash212007$). RS acknowledges support from Interdisciplinary Cyber Physical Systems (ICPS) programme of the Department of Science and Technology (DST), India, Grant No.: DST/ICPS/QuST/Theme-1/2019/6.

\textit{Appendix A: Nosignaling Boxes.}
Consider a scenario consisting of two  spatially separated observers, say Alice and Bob. Let Alice  has access to  inputs $x$ and observes outputs $a$ and Bob has access to inputs $y$ and observes outputs $b$. In this scenario,  a bipartite box $P$ = $P(ab|xy)$ := $\{ p(ab|xy) \}_{a,x,b,y}$ is the set of joint probability distributions $p(ab|xy)$ for all possible $a$, $x$, $b$, $y$. The single-partite   
box $P(a|x)$ := $\{p(a|x)\}_{a,x}$ of a NS box $P(ab|xy)$ 
is the set of marginal probability distributions $p(a|x)$ for all possible $a$ and $x$; which are given by,
\be
p(a|x)=\sum_b p(ab|xy), \quad \forall a,x,y.
\ee
The single-partite box $P(b|y)$ := $\{p(b|y)\}_{b,y}$ of a NS box $P(ab|xy)$ is the set of marginal
probability distributions 
$p(b|y)$ for all possible $b$ and $y$; which are given by,
\be
p(b|y)=\sum_a p(ab|xy), \quad \forall b,x,y.
\ee

A NS box  $P(ab|xy)$ is  local if and only if it can be reproduced by a LHV model, 
\begin{equation}
p(ab|xy)=\sum_\lambda p(\lambda) p(a|x,\lambda)p(b|y,\lambda) \hspace{0.3cm} \forall a,b,x,y;
\end{equation}
where $\lambda$ denotes shared randomness which occurs with probability $p(\lambda)$; each $p(a|x,\lambda)$ and $p(b|y,\lambda)$ are conditional probabilities. Otherwise, it is nonlocal.
The set of local boxes which have a LHV model forms a convex polytope called local 
polytope.

In this work, we have considered  a steering scenario where each of Alice and Bob performs two local measurements and each measurement has two possible outcomes. The set of all correlations produced in this scenario forms a subset of the set of NS bipartite correlations with two binary inputs and binary outputs for Alice and Bob, i.e., $x, y, a, b \in \{0, 1\}$.  In this case, the local polytope  has $16$ extremal boxes 
which are local-deterministic boxes given by,
\be \label{LDB}
    P_{D}^{\alpha \beta \gamma \epsilon} (ab|xy) = 
\begin{dcases}
    1,& \text{if } a = \alpha x \oplus \beta, b = \gamma y \oplus \epsilon \\
    0,              & \text{otherwise}.
\end{dcases}
\ee
Here, $\alpha, \beta, \gamma, \epsilon \in \{0,1\}$ and $\oplus$ denotes addition modulo $2$. Any local box can be written
as a convex mixture of  the above extremal local-deterministic boxes.  All the extremal local-deterministic boxes  as defined above
can be written as the product of marginals corresponding to Alice and Bob,
i.e., $P_D^{\alpha\beta\gamma\epsilon}(ab|xy)=P^{\alpha\beta}_D(a|x)P^{\gamma\epsilon}_D(b|y)$,
with the  deterministic box on Alice's side given by
\begin{equation}
P_D^{\alpha\beta}(a|x)=\left\{
\begin{array}{lr}
1, & a=\alpha x\oplus \beta\\
0 , & \text{otherwise}\\
\end{array}
\right. 
\label{}
\end{equation} 
and the  deterministic box on Bob's side given by,
\begin{equation}
P_D^{\gamma\epsilon}(b|y)=\left\{
\begin{array}{lr}
1, & b=\gamma x\oplus \epsilon\\
0 , & \text{otherwise}.\\
\end{array}
\right.   
\label{}
\end{equation}

A local box satisfies the complete set of Bell  inequalities \cite{WW01}. In the case 
of  two-binary-inputs and two-binary-outputs, the Bell-CHSH inequalities \cite{CHS+69} are given by,
\begin{eqnarray}
\label{chsh}
\mathcal{B}_{\alpha \beta \gamma} =&& (-1)^{\gamma} \langle A_0 B_0 \rangle + (-1)^{\beta \oplus \gamma} \langle A_0 B_1 \rangle \nonumber\\
&&+ (-1)^{\alpha \oplus \gamma} \langle A_1 B_0 \rangle + (-1)^{\alpha \oplus \beta \oplus \gamma \oplus 1} \langle A_1 B_1 \rangle \leq 2,
\end{eqnarray}
where $\alpha, \beta, \gamma \in  \{0, 1 \}$,   $\langle A_x B_y \rangle =  \sum_{a,b} (-1)^{a\oplus b} p(ab|x y)$.
The above set of Bell-CHSH inequalities form the complete set of Bell inequalities.
All these tight Bell inequalities form the nontrivial facets for the local polytope. 
All nonlocal boxes lie outside the local polytope and violate a Bell inequality.

The set of all bipartite  two-input-two-output NS boxes forms an $8$ dimensional convex polytope with $24$ extremal boxes 
\cite{BLM+05}, which can be divided into two classes:  i) nonlocal boxes having $8$ Popescu-Rohrlich (PR) boxes which are equivalent to the canonical PR box \cite{PR94} under ``local reversible operations" (LRO)
and  ii) local boxes having $16$ local-deterministic boxes, which are given in Eq. (\ref{LDB}). By using LRO Alice and Bob can convert any extremal box in one class into any other extremal box within the same class. LRO is designed \cite{BLM+05} as follows: Alice may relabel her inputs: $x \rightarrow x \oplus 1$, and she may relabel her outputs (conditionally on the input) : $a \rightarrow a \oplus \alpha x \oplus \beta$; Bob can perform similar operations.

\textit{Appendix B: Proof of Lemma $1$.}

Here we provide the proof of Lemma $1$ in the main text that the LHV-LHS decomposition of the following correlation:
 \begin{equation}      
 \label{corr1}
 P(ab|xy) =\begin{tabular}{c|cccc}
 \backslashbox{(x,y)}{(a,b)} & (0,0) & (0,1) & (1,0) & (1,1)\\\hline\\[0.05cm]
(0,0) & $\dfrac{1}{2}$ & $\dfrac{1}{4}$ & $0$ & $\dfrac{1}{4}$ \\[0.5cm]
(0,1) & $\dfrac{3}{8}$ & $\dfrac{3}{8}$ & $\dfrac{1}{8}$ & $\dfrac{1}{8}$ \\[0.5cm]
(1,0) & $\dfrac{1}{4}$ & $\dfrac{1}{2}$ & $\dfrac{1}{4}$ & $0$ \\[0.5cm]
(1,1) & $\dfrac{3}{8}$ & $\dfrac{3}{8}$ & $\dfrac{1}{8}$ & $\dfrac{1}{8}$ \\
\end{tabular}
\end{equation}
 cannot be realized with hidden variable having dimension $2$.

At first, let us try to generate a LHV-LHS decomposition of the correlation given by Eq.(\ref{corr1})  with hidden variables of dimension $2$ having different deterministic distributions at Alice's side. In this case the correlation (\ref{corr1}) can be decomposed in the following way:
\begin{equation}
P(ab|x y) = \sum_{\lambda=0}^{1} p(\lambda) P(a|x,\lambda) P(b |y,  \rho_{\lambda}),
\end{equation}
where $0 < p(\lambda) < 1$ for all $\lambda \in \{0,1\}$, $p(0)+p(1) =1$. Since Alice's strategies are deterministic, the two probability distributions $P(a|x,0)$ and $P(a|x,1)$ are different and they must be equal to any two among $P_D^{00}(a|x)$, $P_D^{01}(a|x)$, $P_D^{10}(a|x)$ and $P_D^{11}(a|x)$. It can be easily checked that none of these possible choices will satisfy all the joint probability distributions of the box (\ref{corr1}) considering arbitrary distributions at Bob's end with or without quantum realizations. It is, therefore, impossible to generate a LHV-LHS decomposition of the correlation given by Eq.(\ref{corr1}) with hidden variables of dimension $2$ having different deterministic distributions at Alice's side. 

Now, let us try to generate a LHV-LHS decomposition of the above correlation with hidden variables having dimension $2$ and with different \textit{non-deterministic} distributions at Alice's side. This  can be achieved by constructing a LHV-LHS model with hidden variables of dimension $4$ or $3$ having different deterministic distributions at Alice's side followed by taking equal joint probability distributions (having quantum realisations) at Bob's side as common and making the corresponding distributions at Alice's side non-deterministic.

If the hidden variable dimension in the LHV-LHS decomposition  of the above correlation can be reduced from $4$ to $2$, then the above correlation can be decomposed in the following way,
\begin{equation}
\label{new11}
P(ab|x y) = \sum_{\lambda=0}^{3} p(\lambda) P(a|x,\lambda) P(b |y,  \rho_{\lambda}),
\end{equation}
where $P(a|x,\lambda)$ are different deterministic distributions (without any loss of generality we can assume that $P(a|x,0) = P_D^{00}(a|x)$, $P(a|x,1) = P_D^{01}(a|x)$, $P(a|x,2) = P_D^{10}(a|x)$, $P(a|x,3) = P_D^{11}(a|x)$);
and either any three of the four distributions $P(b  |y, \rho^{\lambda})$ 
are equal to each other or there exists two sets each containing two equal distributions $P(b  |y, \rho^{\lambda})$; $0 < p(\lambda) < 1$ for $\lambda$ = $0,1,2,3$; 
$\sum_{\lambda=0}^{3} p(\lambda) = 1$. Then, as described earlier, taking equal probability distributions 
$P(b |y, \rho_{\lambda})$ at Bob's end as common and making corresponding distribution at Alice's side non-deterministic will reduce the dimension of the hidden variable from $4$ to $2$.

There are the following seven cases in which the dimension of the hidden variable in the LHV-LHS decomposition of the above correlation can be reduced from $4$ to $2$: 
\begin{align}
	& (a) \, \,  P(b |y, \rho_{0}) = P(b |y, \rho_{1}) = P(b |y, \rho_{2}); \nonumber \\ 
	& (b)  \, \, P(b |y, \rho_{0}) = P(b |y, \rho_{1}) = P(b |y, \rho_{3}); \nonumber \\ 
	& (c)  \, \, P(b |y, \rho_{0}) = P(b |y, \rho_{2}) = P(b |y, \rho_{3}); \nonumber \\ 
	& (d)  \, \, P(b |y, \rho_{1}) = P(b |y, \rho_{2}) = P(b |y, \rho_{3}); \nonumber \\
	& (e)  \, \, P(b |y, \rho_{0}) = P(b |y, \rho_{1}) \, \, \text{as well as} \, \,  P(b |y, \rho_{2}) = P(b |y, \rho_{3}); \nonumber\\
	& (f)  \, \, P(b |y, \rho_{0}) = P(b |y, \rho_{2}) \, \, \text{as well as} \, \, P(b |y, \rho_{1}) = P(b |y, \rho_{3}); \nonumber\\
	& (g)  \, \,  P(b |y, \rho_{0}) = P(b |y, \rho_{3}) \, \, \text{as well as} \, \, P(b |y, \rho_{1}) = P(b |y, \rho_{2}); \nonumber
\end{align}
Now in any of these possible cases, considering arbitrary probability distributions 
$P(b |y, \rho_{\lambda})$ at Bob's end (without considering any constraint), it can be shown that all the probability distributions of (\ref{corr1}) are not reproduced simultaneously. For example, consider the case (a) mentioned above. In this case, if we want to satisfy $P(10|00) = 0$, then $P(00|00)$ also becomes $0$, which is not true for (\ref{corr1}). Similarly, for the case (b) mentioned above, if we want to satisfy $P(10|00) = 0$, then $P(00|00)$ also becomes $0$. In a similar way, one can show that all probability distributions of (\ref{corr1}) are not reproduced simultaneously for the other five cases.
Hence, this also holds when the boxes $P(b |y, \rho_{\lambda})$ have quantum realizations. Therefore, the correlation (\ref{corr1}) cannot have a LHV-LHS decomposition with hidden variables having dimension $4$ such that this dimension can be reduced to $2$.

Next, let us check whether the correlation (\ref{corr1}) can have a LHV-LHS decomposition with hidden variables having dimension $3$ such that this dimension can be reduced to $2$. In this case, the above correlation can be decomposed as
\begin{equation}
\label{new11}
P(ab|x y) = \sum_{\lambda=0}^{2} p(\lambda) P(a|x,\lambda) P(b |y,  \rho_{\lambda}),
\end{equation}
where $P(a|x,\lambda)$ are different deterministic distributions;
and any two of the four distributions $P(b  |y, \rho^{\lambda})$ 
are equal to each other; $0 < p(\lambda) < 1$ for $\lambda$ = $0,1,2$; 
$\sum_{\lambda=0}^{2} p(\lambda) = 1$. Then, as described earlier, taking equal joint probability distributions 
$P(b |y, \rho_{\lambda})$ at Bob's end as common and making corresponding distribution at Alice's side non-deterministic will reduce the dimension of the hidden variable from $3$ to $2$.

There are the following twelve cases in which the dimension of the hidden variable in the LHV-LHS decomposition of the above correlation can be reduced from $3$ to $2$: 
\begin{align}
	& (a) \, \,  P(b |y, \rho_{0}) = P(b |y, \rho_{1}) \nonumber \\
 & \hspace{1cm} \, \, \text{with} \, \, P(a|x,0) = P_D^{00}, P(a|x,1) = P_D^{01}, P(a|x,2) = P_D^{10}; \nonumber \\ 
	& (b) \, \,  P(b |y, \rho_{1}) = P(b |y, \rho_{2}) \nonumber \\
 & \hspace{1cm} \, \, \text{with} \, \, P(a|x,0) = P_D^{00}, P(a|x,1) = P_D^{01}, P(a|x,2) = P_D^{10}; \nonumber \\
 & (c) \, \,  P(b |y, \rho_{0}) = P(b |y, \rho_{2}) \nonumber \\
 & \hspace{1cm} \, \, \text{with} \, \, P(a|x,0) = P_D^{00}, P(a|x,1) = P_D^{01}, P(a|x,2) = P_D^{10}; \nonumber \\
 & (d) \, \,  P(b |y, \rho_{0}) = P(b |y, \rho_{1}) \nonumber \\
 & \hspace{1cm} \, \, \text{with} \, \, P(a|x,0) = P_D^{00}, P(a|x,1) = P_D^{01}, P(a|x,2) = P_D^{11}; \nonumber \\ 
& (e) \, \,  P(b |y, \rho_{1}) = P(b |y, \rho_{2}) \nonumber \\
 & \hspace{1cm} \, \, \text{with} \, \, P(a|x,0) = P_D^{00}, P(a|x,1) = P_D^{01}, P(a|x,2) = P_D^{11}; \nonumber \\
 & (f) \, \,  P(b |y, \rho_{0}) = P(b |y, \rho_{2}) \nonumber \\
 & \hspace{1cm} \, \, \text{with} \, \, P(a|x,0) = P_D^{00}, P(a|x,1) = P_D^{01}, P(a|x,2) = P_D^{11}; \nonumber \\
 & (g) \, \,  P(b |y, \rho_{0}) = P(b |y, \rho_{1}) \nonumber \\
 & \hspace{1cm} \, \, \text{with} \, \, P(a|x,0) = P_D^{00}, P(a|x,1) = P_D^{10}, P(a|x,2) = P_D^{11}; \nonumber \\ 
& (h) \, \,  P(b |y, \rho_{1}) = P(b |y, \rho_{2}) \nonumber \\
 & \hspace{1cm} \, \, \text{with} \, \, P(a|x,0) = P_D^{00}, P(a|x,1) = P_D^{10}, P(a|x,2) = P_D^{11}; \nonumber \\
 & (i) \, \,  P(b |y, \rho_{0}) = P(b |y, \rho_{2}) \nonumber \\
 & \hspace{1cm} \, \, \text{with} \, \, P(a|x,0) = P_D^{00}, P(a|x,1) = P_D^{10}, P(a|x,2) = P_D^{11}; \nonumber \\
  & (j) \, \,  P(b |y, \rho_{0}) = P(b |y, \rho_{1}) \nonumber \\
 & \hspace{1cm} \, \, \text{with} \, \, P(a|x,0) = P_D^{01}, P(a|x,1) = P_D^{10}, P(a|x,2) = P_D^{11}; \nonumber \\ 
& (k) \, \,  P(b |y, \rho_{1}) = P(b |y, \rho_{2}) \nonumber \\
 & \hspace{1cm} \, \, \text{with} \, \, P(a|x,0) = P_D^{01}, P(a|x,1) = P_D^{10}, P(a|x,2) = P_D^{11}; \nonumber \\
 & (l) \, \,  P(b |y, \rho_{0}) = P(b |y, \rho_{2}) \nonumber \\
 & \hspace{1cm} \, \, \text{with} \, \, P(a|x,0) = P_D^{01}, P(a|x,1) = P_D^{10}, P(a|x,2) = P_D^{11}; \nonumber 
\end{align}
For any of these above-mentioned cases, considering arbitrary distributions 
$P(b |y, \rho_{\lambda})$  (without imposing any constraint), it can be shown that all the probability distributions of (\ref{corr1}) are not reproduced simultaneously. For example, consider the case (a) mentioned above. In this case, if we want to satisfy $P(10|00) = 0$, then $P(00|10)$ also becomes $0$, which is not true for (\ref{corr1}). Similarly, for the case (b) mentioned above, if we want to satisfy $P(11|10) = 0$, then $P(11|00)$ also becomes $0$ -- incompatible with (\ref{corr1}). In a similar way, one can show that all probability distributions of (\ref{corr1}) are not reproduced simultaneously for the other ten cases.
Hence, this also holds when each of the boxes $P(b |y, \rho_{\lambda})$ has a quantum realization. Therefore, the correlation (\ref{corr1}) cannot have a LHV-LHS decomposition with hidden variables having dimension $3$ such that this dimension can be reduced to $2$. This completes the proof.




\bibliography{oneway}

\end{document}